\newtheorem{mylemma}{Lemma}
\newtheorem{mytheo}{Theorem}
\newtheorem{mydef}{Definition}
\newtheorem{myobser}{Observation}
\newtheorem{mycor}{Corollary}
\title{Properties of a Projected Network of a Bipartite Network}
\author{Suman Banerjee,
	Mamata Jenamani
	and
	Dilip Kumar Pratihar
	\thanks{Suman Banerjee is with the Department
		of Industrial and Syatems Engineering at Indian Institute of Technology, Kharagpur,
		West Bengal, India. e-mail: banerjeesuman1991@gmail.com.}
	\thanks{Mamata Jenamani is with the Department
		of Industrial and Syatems Engineering at Indian Institute of Technology, Kharagpur,
		West Bengal, India. e-mail:mj@iem.iitkgp.ernet.in.}
	\thanks{Dilip Kumar Pratihar is with the Department
		of Mechanical Engineering at Indian Institute of Technology, Kharagpur,
		West Bengal, India. e-mail: dkpra@mech.iitkgp.ernet.in.}}
\begin{document}

\maketitle

\begin{abstract}
Bipartite Graph is often a realistic model of complex networks where two different sets of entities are involved and relationship exist only two entities belonging to two different sets. Examples include the \textit{user-item} relationship of a recommender system, \textit{actor-movie} relationship of an online movie database systems. One way to compress a bipartite graph is to take unweighted or weighted one mode projection of one side vertices. Properties of this projected network are extremely important in many practical situations (say the selection process of influencing nodes for viral marketing). In this paper, we have studied the topological properties for  projected network and some theoretical results are proved including the presence of \textit{cliques}, \textit{connectedness} for unweighted projection and \textit{maximum edge weight} for weighted projected network. 
\end{abstract}

\begin{IEEEkeywords}
Bipartite Graph, Projected Network, Online Social Rating Network, Unipertite Network.
\end{IEEEkeywords}

\section{Introduction} \label{Sec:Introduction}
Bipartite graph \footnote{In this paper, the terms Graph and Network used interchangeably.} \cite{bondy1976graph} is a kind of graph, whose vertex set can be partitioned into two disjoint subsets and two end points of each edge will be from two different vertex sets. Two disjoint sets of entities and their relationships can be easily modeled as a bipartite network. That is why many networks from social to biological has inherent bipartite structure, such as user\mbox{-}item relationship of an Online Social Rating  Networks (OSRN)\cite{matsuo2009community}, customer\mbox{-}product relationship of a recommender systems \cite{ricci2011introduction} \cite{huang2007analyzing}, author\mbox{-}book relationship of a library database systems, citation network between researchers and research papers \cite{gustafsson2014describing}, affiliation network \cite {lattanzi2009affiliation} of professors and academic institutions etc. 
Similarly, in biological contexts there are many situations, where bipartite network appears naturally like the relationship between metabolites and enzymes i.e. which enzyme would act on which metabolites \cite{ravasz2002hierarchical}, the relationship between genes and proteins \cite{vazquez2002modeling}, gene-diseases network \cite{ozgur2008identifying}, i.e., which diseases affects which genes etc.

Though for analyzing a one mode network there exist a lot of techniques, measures and algorithms but for bipartite network it is limited. Projection \cite{zweig2011systematic} (formally defined in section 4) is often used to compress a bipartite network to obtain unipartite network, which can be analyzed further. Practically, one mode projection is widely used in various contexts. Zhoe et al. \cite{zhou2007bipartite} used the one mode projection of a customer\mbox{-}product network for item recommendation. Shang et al. \cite{shang2008personal} used one mode projection of a weighted bipartite graph for item recommendation which produces better recommendation accuracy compared to some other similarity based methods. Besides, in literature the user network which we obtained by one mode projection of user-item bipartite network of an online social rating network, can be used for information diffusion \cite{guille2013information} \cite{bakshy2012role} to recommend items \cite{an2016diffusion}. Different kinds of analysis are required for the user network with different Topological properties. In this paper we have made a theoretical study of the projected network obtained from a bipartite graph and other kind of works like information diffusion etc. will be taken up subsequently. But, the problem lies with one mode projection is that, we loose much information about the parent bipartite graph. One of the ways to retain more information compared to one mode peojection is the weighted one mode projection and the network obtained after weighted one mode projection is called \textit{weighted projected Network}. In this paper, we have given an algorithm for computing the projected network from a given bipartite graph. And also we have studied different topological properties like connectedness, presence of cliques etc. about the projected network.

Rest of this paper is organized as follows: Section \ref{Sec:Preliminary} of this paper talks about preliminary definition and terminologies of graph theory and Bipartite graphs with examples. In Section \ref{Sec:OB}, we have made some crucial observations about bipartite graphs and there we have introduced the notion of sparse and dense bipartite graphs. In Section \ref{Sec:OMP}, the notion of weighted and unweighted projection of a bipartite network has been defined formally with examples. In Section \ref{Sec:WOMP}, we have studied the properties of projected network which includes the connectedness, presence of cliques and upper bound on the edge weight. Finally, we draw conclusions of our work and give future directions in Section \ref{Sec:CFD}. 
\section{Preliminary Concepts} \label{Sec:Preliminary}
\subsection{Basic Terminologies}
 A \textit{graph} is generally represented by a two tuple $G(V,E)$, where $V(G)$ is known as \textit{vertex} set and edge set $E(G)$, where $E(G) \subset V(G) \times V(G)$. \textit{Cardinalities} of $V(G)$ and $E(G)$ are  respectively known as order and edge count of $G$. Two vertices $v_i$ and $v_j$ are said to be \textit{adjacent} if they are connected by an edge, i.e., $(v_iv_j) \in E(G)$. An edge is \textit{incident} on a vertex, if it is one of the end vertex of that edge. Two edges are adjacent to each other, if they have one end vertex in common. For any arbitrary vertex $v_i$ of $G$, other vertices directly connected by an edge with $v_i$ is known as the \textit{neighborhood} of $v_i$ in $G$ and denoted by $N(v_i)$. Cardinality of neighborhood of a vertex is called the degree of the vertex and denoted by $deg(v_i)$ so $deg(v_i)=\vert N(v_i) \vert$. A vertex with degree 0 is called isolated vertex, and degree 1 is called pendent vertex. Two edges have the same end vertices is known as parallel edges. For any edge, if both the end vertices are identical, then the edge is called as \textit{self \mbox{-} loop}. A path in a graph is a sequence of vertices connected by edges where no vertex is repeated. A graph is connected if between every pair of vertices there exist at least one path. A graph if contains either parallel edges or self \mbox{-} loops or both is known as multigraph. A graph is called as simple graph if it neither contains parallel edges nor self \mbox{-} loops. A graph is called a weighted graph if its edges are labeled with a real number. Weighted graphs are generally symbolized as $G(V, E, W)$ where $V(G)$ and $E(G)$ are vertex set and edge set of the graph respectively as usual and $W(G)$ is a function which assigns a real number to every edge of $G$ mathematically $W(G): E(G)\rightarrow R$. For other terminologies readers may refer to \cite{diestel2000graph}. In this work, we have considered only connected bipartite graphs. 
\subsection{Bipartite Graph} \label{Sec:BG}
A graph will be said to a bipartite graph if its vertex set has two disjoint subsets and each edge connects one vertex from each group or in other words both the end vertex of any edge will not be from the same subset. Mathematically the definition can be stated as follows {\cite{abiad2013algebraic}:
\begin{mydef}
$G(U,S,E)$ will be called a bipartite graph if $V(G)= U(G) \cup S(G)$ and $U(G) \cap S(G) = \phi$ and for each edge $(uv) \in E(G)$ either $u \in U(G)$, $v \in S(G)$ or $v \in U(G)$, $u \in S(G)$. $G$ will be a complete bipartite graph if $\forall u \in U(G)$ and $\forall v \in S(G)$, $(uv) \in E(G)$. 
\end{mydef} 
For a bipartite graph it is interesting to notice that for any vertex $u_i \in U$, $N(u_i) \subset S$ and $\forall s_j \in S$ $N(s_j) \subset U$. Normally a bipartite graph is represented by a $\vert U(G) \vert \times \vert S(G) \vert$ matrix known as \textit{Bi-adjacency} matrix ($B$) and the content of the matrix is as follows: 
\[
    (B)_{ij}= 
\begin{cases}
    1,& \text{if } (u_is_j)\in E(G)\\
    0,              & \text{otherwise}
\end{cases}
\]
Below a bi-adjacency matrix of order $6 \times 4$ is given and its corresponding bipartite graph is shown in Fig. 1:\\
\renewcommand{\kbldelim}{(}
\renewcommand{\kbrdelim}{)}
\[
  M = \kbordermatrix{
    & s_1 & s_2 & s_3 & s_4  \\
    u_1 & 1 & 0 & 0 & 0 \\
    u_2 & 1& 1 & 0 & 0 \\
    u_3 & 1& 1 & 1 & 0 \\
    u_4 & 0 & 1 & 0 & 1\\
    u_5 & 0 & 0 & 1 & 0\\
    u_6 & 0 & 0 & 0 & 1
  }
\]
\begin{figure}[h]
\begin{center}
\definecolor{myblue}{RGB}{80,80,160}
\definecolor{mygreen}{RGB}{80,160,80}
\begin{tikzpicture}[thick,
  every node/.style={draw,circle},
  every loop/.style={},
  fsnode/.style={fill=myblue},
  ssnode/.style={fill=mygreen},
  every fit/.style={ellipse,draw,inner sep=-2pt,text width=2cm},
  shorten >= 3pt,shorten <= 3pt
]
\centering
\begin{scope}[start chain=going below,node distance=7mm]
\foreach \i in {1,2,...,6}
  \node[fsnode,on chain] (f\i) [label=left: \ $u_\i$] {};
\end{scope}

\begin{scope}[xshift=4cm,yshift=-0.5cm,start chain=going below,node distance=5mm]
\foreach \i in {1,2,...,4}
  \node[ssnode,on chain] (s\i) [label=right: \ $s_\i$] {};

\end{scope}
\node [myblue,fit=(f1) (f6),label=below:$U$] {};
\node [mygreen,fit=(s1) (s4),label=below:$S$] {};
\draw (f1) - - (s1);
\draw (f2) - - (s1);
\draw (f3) - - (s1);
\draw (f2) - - (s2);
\draw (f3) - - (s2);
\draw (f4) - - (s2);
\draw (f3) - - (s3);
\draw (f4) - - (s4);
\draw (f5) - - (s3);
\draw (f6) - - (s4);
\end{tikzpicture}
\caption{A Bipartite Graph} 
\end{center}
\end{figure}
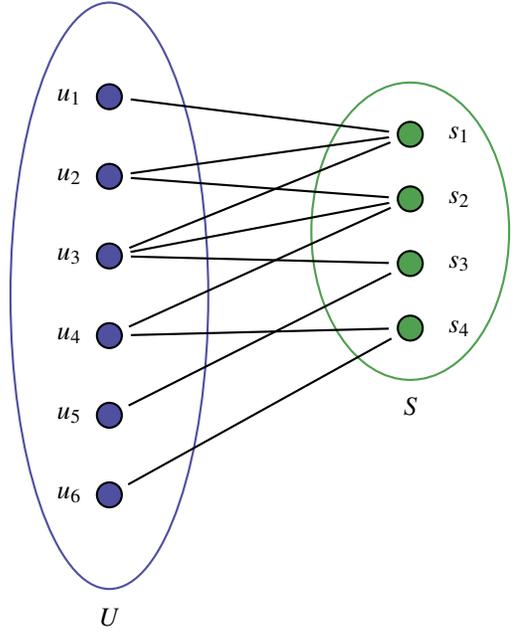
\section{Observations on Bipartite Graphs} \label{Sec:OB}
We make the following observations in the context of Bipartite Graph. These observations form the basis for deriving the properties of the corresponding unipertite network. 
\begin{myobser}
For a bipartite graph, sum of degress of two different sides of vertices will be equal. Mathematically,  for a bipartite graph $G(U, S, E)$, where $\vert U(G) \vert=n_1$, $\vert S(G) \vert =n_2$ and $\vert E(G) \vert=m$, then the following relation always holds:\\
\begin{equation}
\sum_{i=1}^{n_1}deg(u_i)=\sum_{j=1}^{n_2}deg(s_j)
\end{equation} 
\end{myobser}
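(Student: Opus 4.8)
The plan is to prove the identity by a double-counting argument: I would show that both sides of the equation count the same object, namely the total number of edges $m$, each edge being counted once per side. The natural bookkeeping device is the bi-adjacency matrix $B$ introduced above, since it records every incidence of $G$ in a single array and lets me convert a statement about degrees into a statement about row and column sums.

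First I would express each degree as a sum of entries of $B$. For any $u_i \in U$, its neighbourhood lies entirely in $S$ (as noted in Section~\ref{Sec:BG}), so that $deg(u_i) = \sum_{j=1}^{n_2} (B)_{ij}$ is exactly the $i$-th row sum of $B$. Symmetrically, for any $s_j \in S$ the neighbourhood lies in $U$, giving $deg(s_j) = \sum_{i=1}^{n_1} (B)_{ij}$, the $j$-th column sum. Next I would sum over all vertices on each side: summing the row sums yields $\sum_{i=1}^{n_1} deg(u_i) = \sum_{i=1}^{n_1}\sum_{j=1}^{n_2} (B)_{ij}$, while summing the column sums yields $\sum_{j=1}^{n_2} deg(s_j) = \sum_{j=1}^{n_2}\sum_{i=1}^{n_1} (B)_{ij}$. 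Both expressions are just the sum of all entries of $B$, written in the two possible orders of summation, so interchanging the finite double sum immediately gives equality.

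As an alternative, purely combinatorial route that avoids the matrix entirely, I would argue directly from the definition of degree: each edge $(u_i s_j) \in E(G)$ contributes exactly $1$ to $deg(u_i)$ and exactly $1$ to $deg(s_j)$, because in a bipartite graph every edge has precisely one endpoint in each part. Hence $\sum_{i} deg(u_i)$ counts every edge once from the $U$ side and $\sum_{j} deg(s_j)$ counts every edge once from the $S$ side, so both equal $m$ and are therefore equal to each other. This is essentially the handshaking lemma specialised to the bipartite setting.

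I do not expect a genuine obstacle here; the statement is elementary. The only points requiring a little care are justifying the interchange of the order of summation (valid since the sum is finite) and observing that each $1$ in $B$ corresponds to exactly one edge of $G$, so that the common value of the two sums is precisely $m$. The entire content of the observation is the bipartiteness of $G$: because no edge has both endpoints on the same side, the two edge-endpoint counts cannot diverge.
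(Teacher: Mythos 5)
Your proposal is correct. The paper states this as an observation without supplying any proof of its own, and your second, purely combinatorial argument (each edge contributes exactly one to a degree on the $U$ side and exactly one to a degree on the $S$ side, so both sums count the edges and equal $m$) is precisely the reasoning the paper implicitly relies on; the bi-adjacency-matrix version is the same count written as row sums versus column sums. One small bonus of your route: by showing both sides equal $m$ directly, you already obtain the identities $\sum_{i} deg(u_i)=m$ and $\sum_{j} deg(s_j)=m$, which the paper only derives later by combining this observation with Euler's handshaking theorem.
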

\begin{myobser}
For a connected bipartite Graph $G(U, S, E)$ with $\vert U(G) \vert=n_1$, $\vert S(G) \vert =n_2$ then $\forall u_i \in U$, $1 \leq deg(u_i) \leq n_2$ and $\forall s_j \in S$, $1 \leq deg(s_j) \leq n_1$.
\end{myobser}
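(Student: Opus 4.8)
The plan is to establish the two required inequalities separately, exploiting the symmetry between the roles of $U$ and $S$: I would prove the claim for an arbitrary $u_i \in U$ and then observe that the argument for an arbitrary $s_j \in S$ is identical with the roles of $U$ and $S$ interchanged. For each vertex there are really two things to check, a lower bound of $1$ coming from connectedness and an upper bound coming purely from the bipartite structure, and these two facts have completely different sources.

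For the upper bound I would invoke the structural fact recorded just after the definition of a bipartite graph, namely that $N(u_i) \subset S$ for every $u_i \in U$. Since $G$ is a simple graph (no parallel edges, as fixed in the preliminaries), every neighbour of $u_i$ is joined to $u_i$ by exactly one edge, so $deg(u_i) = \vert N(u_i) \vert$. Because $N(u_i)$ is a subset of $S$ and $\vert S \vert = n_2$, this gives $deg(u_i) \leq n_2$ at once. The same reasoning applied to $N(s_j) \subset U$ with $\vert U \vert = n_1$ yields $deg(s_j) \leq n_1$.

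For the lower bound I would argue by contradiction using connectedness. Since both $U$ and $S$ are non-empty, $G$ has at least two vertices. If some vertex $v$ had $deg(v) = 0$, then $v$ would be an isolated vertex, and no path could join $v$ to any other vertex, contradicting the hypothesis that $G$ is connected. Hence every vertex, and in particular every $u_i \in U$ and every $s_j \in S$, has degree at least $1$.

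The statement is essentially a direct reading of the definitions, so I do not anticipate a genuine obstacle; the only step requiring any care is the lower bound, where one must make explicit use of the fact that \emph{both} parts are non-empty, so that an isolated vertex truly violates connectedness rather than trivially forming a connected one-vertex graph.
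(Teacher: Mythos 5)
Your proposal is correct and complete: the upper bound follows from $N(u_i)\subset S$ and $deg(u_i)=\vert N(u_i)\vert$, and the lower bound from connectedness together with the fact that both parts are non-empty. The paper states this as an observation without supplying any proof, so there is nothing to compare against; your argument is precisely the justification the paper implicitly relies on, and your care about ruling out the trivial one-vertex case is a welcome extra.
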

If $\vert U(G) \vert =n_1$ and $\vert S(G) \vert=n_2$ then maximum possible edges in the bipartite graph will be $n_1n_2$, i.e., $\vert E(G) \vert _{max}=n_1n_2$. A simple graph $G(V, E)$ is said to be a sparce if  its number of edges present in the graph is linear with the number of vertices i.e. $\vert E(G) \vert \simeq \mathcal{O}(\vert V(G) \vert)$ and dense if its no. of edges is quadratic with no. of vertices i.e.  $\vert E(G) \vert \simeq \mathcal{O}(\vert V(G) \vert^{2})$ \cite{cormen2009introduction}.\\
This defination can be transformed in the context of bipartite graph as follows:
\begin{mydef}
Let $G(U,S,E)$ be a bipartite graph with $\vert U(G) \vert=n_1$, $\vert S(G) \vert =n_2$ and $\vert E(G) \vert=m$. Now, $G$ will be a sparse bipartite graph if $m=\mathcal{O}(n_1 + n_2)$ and dense bipartite graph if $m=\mathcal{O}(n_1n_2)$
\end{mydef}  
\begin{mytheo}[Euler]
\cite{bollobas2013modern} For a simple graph sum of degrees of all the vertices will be twice the number of edges. So, if $\mathcal{G}(V, E)$ be a simple graph with $\vert V(\mathcal{G}) \vert = n$ and $\vert E(\mathcal{G}) \vert = m$ then,
\begin{equation}
\sum_{i=1}^{n} deg(v_i)=2m
\end{equation}
\end{mytheo}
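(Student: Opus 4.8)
The plan is to prove this by a double-counting (incidence) argument, which is the most direct route and meshes with the degree-based reasoning already used for Observation 1. First I would introduce the incidence set
\[
I = \{\,(v,e) : v \in V(\mathcal{G}),\ e \in E(\mathcal{G}),\ v \text{ is an endpoint of } e\,\},
\]
and then compute its cardinality $|I|$ in two different ways, equating the results at the end.

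Counting by vertices: for a fixed vertex $v_i$, the pairs in $I$ with first coordinate $v_i$ are in bijection with the edges incident on $v_i$, and since $\mathcal{G}$ has no parallel edges this count equals $\vert N(v_i) \vert = deg(v_i)$. Summing over all $n$ vertices gives $|I| = \sum_{i=1}^{n} deg(v_i)$. Counting by edges: because $\mathcal{G}$ has no self-loops, each edge $e=(v_av_b)$ has two \emph{distinct} endpoints and so contributes exactly the two pairs $(v_a,e)$ and $(v_b,e)$ to $I$; hence each of the $m$ edges accounts for precisely $2$ elements, giving $|I| = 2m$. Equating the two expressions yields $\sum_{i=1}^{n} deg(v_i) = 2m$, as claimed.

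The main obstacle — really the only subtlety — is pinpointing where the simple-graph hypothesis is actually used, since the identity fails for general multigraphs. It enters in two places: the absence of parallel edges is what makes the per-vertex count equal to $deg(v_i) = \vert N(v_i)\vert$ rather than something larger, and the absence of self-loops is what guarantees every edge contributes exactly $2$ (a loop would be incident on a single vertex and would spoil the factor of $2$). Both conditions hold by the paper's standing assumption that the graphs considered are simple, so no case analysis is required.

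As an alternative I could give a short induction on $m$: the edgeless graph on $n$ vertices has every degree $0$ and satisfies $0 = 2\cdot 0$; inserting one new edge raises the degrees of its two distinct endpoints by $1$ each, increasing the left-hand side by $2$ while the edge count passes from $m$ to $m+1$, increasing the right-hand side from $2m$ to $2(m+1)$. Since both sides grow by $2$, the identity is preserved, completing the induction. I would present the double-counting version as the primary proof because it is self-contained and exposes the combinatorial meaning of the factor of $2$ most transparently.
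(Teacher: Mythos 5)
Your double-counting proof is correct and complete; note that the paper itself offers no proof of this theorem --- it is stated as a classical result with a citation to Bollob\'as --- so there is nothing internal to compare against, and the incidence-set argument you give is the canonical one. One small imprecision: the identity does not actually ``fail for general multigraphs'' under the usual convention where parallel edges each contribute to the degree and loops contribute $2$; it fails only under this paper's specific definition $deg(v_i)=\vert N(v_i)\vert$, which is exactly the point your per-vertex count correctly isolates, so the substance of your remark is right even if the phrasing overstates it.
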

In the context of bipartite graphs Equation no. (2) can be written as follows:\\
\begin{equation}
\sum_{i=1}^{n_1} deg(u_i) + \sum_{j=1}^{n_2} deg(s_j)=2m
\end{equation}
Now, using Equation no. (1) and (3) we can write:\\
\begin{equation}
\sum_{i=1}^{n_1} deg(u_i)=m
\end{equation}
and 
\begin{equation}
\sum_{j=1}^{n_2} deg(s_j)=m
\end{equation}
If $G$ the is a sparse bipartite graph then by the definition of sparse bipartite graph from Equation no. (5)
\begin{equation}
\sum_{i=1}^{n_1} deg(u_i)=\mathcal{O}(n_1 + n_2)
\end{equation}
By the definition of Big-oh notation Equation no. (6) can be written as follows:\\
\begin{equation}
\sum_{i=1}^{n_1} deg(u_i) \leq\ c(n_1 + n_2)
\end{equation}
and similarly
\begin{equation}
\sum_{j=1}^{n_2} deg(s_j) \leq\ c(n_1 + n_2)
\end{equation}
where $c$ is a positive constant.
In many practical situations $n_1 \gg n_2$. Like if we consider the scenario of an online social rating network there number of the users is much more than the number of items. In that context, inequalities (7) and (8) can be reduced to:\\ 
\begin{equation}
\sum_{i=1}^{n_1} deg(u_i) \leq\ c n_1 
\end{equation}
and similarly,
\begin{equation}
\sum_{j=1}^{n_2} deg(s_j) \leq\ c n_1 
\end{equation}
\section{Weighted and Unweighted One Mode Projection} \label{Sec:OMP}
As most of the network analysis algorithms and measures are mainly designed for general kind of graphs, one of the usual technique is to project the one side of the vertices of a bipartite graph based on the connectivity with the other side of vertices. Formally one mode projection of a bipartite graph can be defined as follows  
\begin{mydef}
Let $G(U,S,E)$ be a bipartite graph with $\vert U(G) \vert=n_1$, $\vert S(G) \vert =n_2$ and $\vert E(G) \vert=m$. Now projection of the bipartite graph $G$ for the vertex set $U$ with respect to the vertex set $S$ is to construct a unipertite or one mode network $G^{'}(U, E^{'})$ where $V(G)=U$ and $(u_iu_j) \in E(G^{'})$ if $N(u_i) \cap N(u_j) \neq \phi$.
\end{mydef}
From a bipartite graph always two projected network will be obtained. One for the projection of the Vertex set $U$ with respect to the vertex set $S$ and the other one for projecting the vertex set $S$ with respect to the vertex set $U$. In rest of our paper unless otherwise stated all the theoretical results that have been proved for the projected network obtained by the projection of the vertex set $U$ with respect to the vertex set $S$.\\ 
Following algorithm will take the Bi-adjacency matrix ($B$) of bipartite graph ($G$) and produces the adjacency matrix ($A$) of the projected network ($G^{'}$):\\

\begin{algorithm}
 \KwData{Bi-adjacency Matrix ($B$) of the Bipartite Network.}
 \KwResult{ Adjacency Matrix ($A$) of the Projected Network}
 $n_1 \leftarrow B.no\_ of\_rows()$\;
 $n_2 \leftarrow B.no\_ of\_columns()$\;
 \For{$i=1 $ to $n_1$}{
 \For{$j=i+1 $ to $n_1$}{
 \For{$k=1 $ to $n_2$}{
  \eIf{$B[i][k]==1 \&\& B[j][k]==1$}{
   $A[i][j]=1$\;
   break\;
   }{
   $A[i][j]=0$\;
  }
 }
 }
 }
 \caption{Algorithm for Computing Projected Network}
\end{algorithm}
Algorithm 1 mainly performing the exhaustive search for all possible vertex pairs of one side whether they have at least one common neighborhood vertex in the other side or not. So, computational time required by Algorithm 1 is as follows:\\
\begin{center}
$f(n_1n_2)= {n_1 \choose 2} \mathcal{O}(n_2)$\\

$\Rightarrow f(n_1n_2)=  \frac{n_1(n_1-2)}{2} \mathcal{O}(n_2)$\\

$\Rightarrow f(n_1n_2)= \mathcal{O}(n_1^{2}n_2)$
\end{center}
and except some loop variables one matrix of dimension $n_1 \times n_1$ has been created in Algorithm 1.
\begin{mytheo}
Algorithm has running time $\mathcal{O}(n_1^{2}n_2)$ and space requirement $\mathcal{O}(n_1^{2})$.
\end{mytheo}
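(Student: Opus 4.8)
The plan is to analyze the two resource bounds separately, treating the running time first and the space afterwards, by directly inspecting the control structure of Algorithm~1.

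For the running time, I would begin by isolating the three nested loops. The outer loop indexed by $i$ and the middle loop indexed by $j$ together enumerate every unordered pair of distinct row indices exactly once, since $j$ ranges from $i+1$ to $n_1$; the number of such pairs is $\binom{n_1}{2} = \frac{n_1(n_1-1)}{2}$. For each fixed pair $(i,j)$, the innermost loop over $k$ scans the $n_2$ columns, performing in each iteration a constant amount of work (two array look-ups, one logical comparison, and one assignment). The \texttt{break} statement can only cause the inner loop to terminate early, so it never increases the cost; the worst case is realized when $u_i$ and $u_j$ share no common neighbour, forcing all $n_2$ iterations. Hence the inner loop costs $\mathcal{O}(n_2)$ per pair, and multiplying by the number of pairs gives a total of $\binom{n_1}{2}\,\mathcal{O}(n_2) = \frac{n_1(n_1-1)}{2}\,\mathcal{O}(n_2) = \mathcal{O}(n_1^2 n_2)$, as claimed.

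For the space requirement, I would account for every quantity the algorithm stores beyond its input. The only non-constant allocation is the adjacency matrix $A$ of the projected network, whose dimension is $n_1 \times n_1$ and which therefore occupies $\mathcal{O}(n_1^2)$ cells. All remaining variables, namely $n_1$, $n_2$, and the three loop counters $i$, $j$, $k$, use only $\mathcal{O}(1)$ additional storage. Summing these contributions yields an overall working-space bound of $\mathcal{O}(n_1^2)$.

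Since the analysis is a direct traversal of the loop bounds, there is no genuine mathematical obstacle; the only point requiring a moment's care is to argue that the early \texttt{break} does not affect the asymptotic upper bound, which follows because a premature exit only reduces the iteration count and thus keeps the per-pair cost within $\mathcal{O}(n_2)$. One could also remark that the input matrix $B$ itself occupies $\mathcal{O}(n_1 n_2)$ space, but under the standard convention of measuring only auxiliary space the stated $\mathcal{O}(n_1^2)$ bound stands.
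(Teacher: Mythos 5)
Your proof is correct and follows essentially the same route as the paper: count the $\binom{n_1}{2}$ row pairs, charge $\mathcal{O}(n_2)$ to the inner scan for each, and observe that the only non-trivial allocation is the $n_1 \times n_1$ matrix $A$. Your version is slightly more careful --- you justify that the \texttt{break} only helps, and you write the pair count correctly as $\frac{n_1(n_1-1)}{2}$ where the paper has a typo ($\frac{n_1(n_1-2)}{2}$) --- but the argument is the same.
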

So, by the above\mbox{-}maintained definition if $A$ be the adjacency matrix of the graph $G^{'}$ then the content of $A$ can be written as follows:\\
\[
    A_{ij}= 
\begin{cases}
    1,& \text{if } N(u_i) \cap N(u_j) \neq \phi \\
    0,              & \text{otherwise}
\end{cases}
\]
For the bipartite graph shown in section \ref{Sec:BG} its projected network for the vertex set $U$ with respect to the vertex set $S$ and for the vertex set $S$ with respect to the vertex set $U$ is shown in Fig. 2 and Fig. 3 respectively.

\begin{figure}
\begin{center}
\includegraphics[scale=0.7]{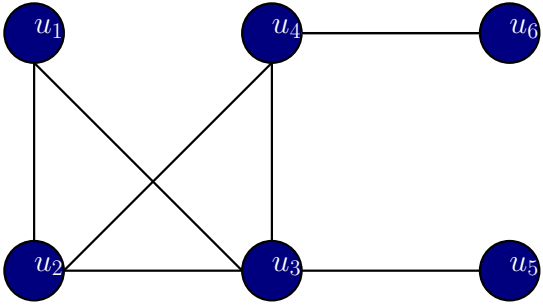}
\caption{Projected Network of U with respect to S}
\end{center}
\end{figure}

\begin{figure}
\begin{center}
\includegraphics[scale=0.7]{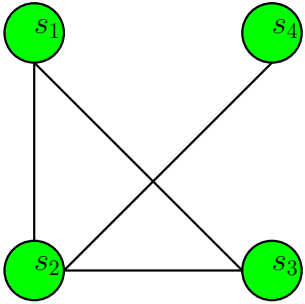}
\caption{Projected Network of S with respect to U}
\end{center}
\end{figure}

Weighted one mode projection is a technique used to obtain a weighted one mode network from a bipartite network, where edge weight represents the number of common neighbor the vertices have and the network obtained using weighted one mode projection is called weighted projected network.
Formally it can be defined as follows:
\begin{mydef}
For a bipartite graph $G(U, S, E)$ with $\vert U \vert=n_1$, $\vert S \vert=n_2$, $\vert E \vert=m$, weighted one mode projection of $G$ for the vertex set $U$ with respect to the vertex set $S$ will be a weighted unipartite network $\mathcal{G}(U, \mathcal{E}, W)$, where $V(\mathcal{G})=U$, two vertex $u_i,u_j \in U$, will be a edge in $\mathcal{G}$ i.e. $(u_iu_j) \in E(\mathcal{G})$ if $N(u_i) \cap N(u_j) \neq \phi$ in $G$ and $W$ is a weight function which assigns the number of common neighbor two vertices have, i.e. $W: (u_iu_j) \rightarrow \vert N(u_i) \cap N(u_j) \vert$. 
\end{mydef} 
From the above definition the content of the weight matrix ($W$) of the weighted projected network ($\mathcal{G}$) can be written as follows:
\[
    W_{ij}= 
\begin{cases}
    \vert N(u_i) \cap N(u_j) \vert,& \text{if } N(u_i) \cap N(u_j) \neq \phi \\
    0,              & \text{otherwise}
\end{cases}
\]
With minor modification Algorithm 1 can be used for computing weighted one mode projection as well.\\

%
%
%

%
%
\section{Properties of Weighted and Unweighted Projected Network} \label{Sec:WOMP}
In this section, we have studied the properties of unweighted projected network for connectedness, presence of cliques and proved the upper bound on the edge weight of the weighted projected network.
\begin{mylemma}
If $G^{'}(U, E^{'})$ be the projected network of a bipartite graph $G(U, S, E)$ for the vertex set $U$ with respect to the vertex set $S$ then each vertex $s_j \in S$ induces a clique of size $\vert N(s_j) \vert$ if $\vert N(s_j) \vert \geq 2$.  
\end{mylemma}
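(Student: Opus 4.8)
The plan is to show directly that the vertex set $N(s_j) \subseteq U$ induces a complete subgraph in $G^{'}$, that is, that every pair of distinct vertices drawn from $N(s_j)$ is joined by an edge in $G^{'}$. Write $N(s_j)=\{u_{i_1},u_{i_2},\dots,u_{i_k}\}$ where $k=\vert N(s_j) \vert \geq 2$. It then suffices to verify the projection edge condition for an arbitrary unordered pair $u_a, u_b \in N(s_j)$ with $u_a \neq u_b$, and to conclude from arbitrariness that all such pairs are adjacent.

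The key step exploits the symmetry of adjacency in a bipartite graph, namely that $u \in N(s) \iff s \in N(u)$. Since $u_a \in N(s_j)$ and $u_b \in N(s_j)$, the edges $(u_a s_j)$ and $(u_b s_j)$ both belong to $E(G)$; equivalently, $s_j \in N(u_a)$ and $s_j \in N(u_b)$. Hence $s_j \in N(u_a) \cap N(u_b)$, so that $N(u_a) \cap N(u_b) \neq \phi$. By the defining condition of the one-mode projection of $U$ with respect to $S$, this is precisely the condition under which $(u_a u_b) \in E(G^{'})$.

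Because $u_a$ and $u_b$ were an arbitrary distinct pair from $N(s_j)$, every one of the $\binom{k}{2}$ possible edges among the vertices of $N(s_j)$ is present in $G^{'}$, so the subgraph of $G^{'}$ induced on $N(s_j)$ is the complete graph $K_k$---a clique of size $k=\vert N(s_j) \vert$, as claimed. The hypothesis $\vert N(s_j) \vert \geq 2$ enters only to guarantee that $N(s_j)$ contains at least one such pair, so that the induced clique genuinely carries an edge rather than reducing to a single isolated vertex. I do not expect a genuine obstacle here: the lemma is elementary, and the only point requiring care is to invoke the projection definition correctly and to use the bipartite adjacency symmetry when exhibiting $s_j$ as the common neighbor that certifies each edge.
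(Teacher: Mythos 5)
Your proof is correct and follows essentially the same route as the paper's: for any two distinct vertices $u_a,u_b\in N(s_j)$, the vertex $s_j$ itself witnesses $N(u_a)\cap N(u_b)\neq\phi$, so the pair is adjacent in $G^{'}$ and $N(s_j)$ induces a clique. You are merely more explicit about the adjacency symmetry $u\in N(s)\iff s\in N(u)$ and the role of the hypothesis $\vert N(s_j)\vert\geq 2$, which the paper leaves implicit.
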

\begin{proof}
This lemma can be proved from some simple observation from the Definition 3 given in Section \ref{Sec:OMP}. As $\forall u_i \in N(s_j)$ has atleast one common neighbor in $S$ and that is $s_j$. So, $\forall u_i,u_j \in N(s_j)$ $ N(u_i) \cap N(u_j) \neq \phi$. And, hence $N(s_j)$ will be a clique in $G^{'}$. This can be verified from the bipartite graph shown in Fig. 1. also. In Fig. 1. for $s_2 \in S$ its neughborhood $N(s_2)=\left \{u_2,u_3,u_4 \right \}$ and that's why $\left \{u_2,u_3,u_4 \right \}$ is a clique in the graph shown in Fig. 2. Similarly, for $u_3 \in U$, $N(u_3)=\left \{s_1,s_2,s_3 \right \}$ and so $\left \{s_1,s_2,s_3 \right \}$ is a clique in the graph shown in Fig. 3.
\end{proof}
\begin{mylemma}
If $G^{'}(U, E^{'})$ be the projected network of a bipartite graph $G(U, S, E)$ for the vertex set $U$ with respect to the vertex set $S$ and for some $(u_is_j)\in E(G)$ $(u_is_k)\notin E(G)$ for $k \in [n_2]\backslash j$ and also $(u_ps_j)\notin E(G)$ for $p \in [n_1]\backslash i$ then $G^{'}$ will be disconnected. \footnote{$[n]$ means the set of natural numbers from $1$ to $n$ i.e.$\{ 1, 2, 3, \dots\ , n\}$}
\end{mylemma}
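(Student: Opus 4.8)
The plan is to show that the two hypotheses together force $u_i$ to become an isolated vertex of the projected network $G'$, from which disconnectedness follows at once. First I would unpack the index conditions into statements about neighborhoods. The hypothesis $(u_is_j)\in E(G)$ combined with $(u_is_k)\notin E(G)$ for every $k\in[n_2]\setminus j$ says precisely that $N(u_i)=\{s_j\}$, i.e. $u_i$ is a pendant vertex in $G$ whose unique neighbor in $S$ is $s_j$. Dually, $(u_ps_j)\notin E(G)$ for every $p\in[n_1]\setminus i$ says that $N(s_j)=\{u_i\}$, so $s_j$ is adjacent to no vertex of $U$ except $u_i$. Thus the edge $(u_is_j)$ is effectively an isolated edge of the bipartite graph.

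Next I would apply Definition 3 to compute the neighborhood of $u_i$ in $G'$. By that definition, for $p\neq i$ the pair $(u_iu_p)$ lies in $E(G')$ if and only if $N(u_i)\cap N(u_p)\neq\phi$. Since $N(u_i)=\{s_j\}$, this intersection is nonempty exactly when $s_j\in N(u_p)$, that is, when $(u_ps_j)\in E(G)$. The second hypothesis rules this out for every $p\neq i$, so there is no $u_p$ adjacent to $u_i$ in $G'$. Hence $u_i$ is an isolated vertex of $G'$.

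Finally, since $G'$ contains the isolated vertex $u_i$ together with at least one further vertex of $U$ (the statement is nontrivial only when $n_1\geq 2$), there can be no path joining $u_i$ to any other vertex, and therefore $G'$ is disconnected. I do not expect a genuinely hard step here; the only point requiring care is the clean translation of the two index conditions into $N(u_i)=\{s_j\}$ and $N(s_j)=\{u_i\}$, and keeping straight that it is the first hypothesis that pins down the single element of $N(u_i)$, while the second is exactly what prevents any other vertex from sharing that element. The degenerate case $n_1=1$ can be dismissed in a line, as a one-vertex projection is vacuously connected.
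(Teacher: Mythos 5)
Your argument is correct and follows essentially the same route as the paper's: translate the two index conditions into $N(u_i)=\{s_j\}$ and $N(s_j)=\{u_i\}$, conclude that no $u_p$ with $p\neq i$ shares a neighbor with $u_i$, and hence that $u_i$ is isolated in $G'$. Your explicit handling of the degenerate case $n_1=1$ is a small point of extra care that the paper omits, but the substance is identical.
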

\begin{proof}
Let us consider  that $G(U, S, E)$ be a bipartite graph with vertex set $U=\{ u_1, u_2, u_3, \dots\ , u_{n_1}\}$ and $S=\{ s_1, s_2, s_3, \dots\ , s_{n_2}\}$ and there exist two vertices, one in $U$ ( i.e. $u_i \in U$) and another in $S$ (i.e. $s_j \in S$) such that $(u_is_j) \in E(G)$ and $(u_is_k)\notin E(G)$ for $k \in [n_2]\backslash j$ and also $(u_ps_j)\notin E(G)$ for $p \in [n_1]\backslash i$. It essentially implies that $u_i$ and $s_j$ are pendant vertices in $U$ and $S$ respectively. As $u_i$ is a pendant vertex and only connected with $s_j$ in $S$ so $N(u_i)=\{s_j\}$. Also, it is not difficult to see that as $s_j$ is a pendent vertex in $S$ so $\nexists u \in U$ such that $N(u) \cap N(u_i) \neq \phi$. It essentially implies that $u_i$ will be disconnected from the other part of the network means $G^{'}$ will be disconnected.
\end{proof}
\begin{mylemma}
If $\mathcal{G}(U, \mathcal{E}, W)$ be the weighted projected network of a bipartite network $G(U, S, E)$ and $W_{ij}$ be the edge weight of the edge $(u_iu_j)$ then $1 \leq W_{ij} \leq n_2$ where $n_2=\vert S \vert$.
\end{mylemma}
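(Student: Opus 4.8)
The plan is to read off both inequalities directly from the weight function fixed in Definition 4, where $W_{ij} = |N(u_i) \cap N(u_j)|$ whenever $(u_i u_j)$ is an edge of $\mathcal{G}$. The entire argument reduces to two elementary cardinality facts about the set $N(u_i) \cap N(u_j)$, namely that it is nonempty (for the lower bound) and that it sits inside $S$ (for the upper bound).

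For the lower bound, I would invoke the edge-existence condition. By Definition 4, an edge $(u_i u_j)$ is present in $\mathcal{G}$ precisely when $N(u_i) \cap N(u_j) \neq \phi$. Since the statement concerns the weight $W_{ij}$ of an actual edge, the intersection is nonempty, and a nonempty finite set has cardinality at least one, so $W_{ij} = |N(u_i) \cap N(u_j)| \geq 1$.

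For the upper bound, I would lean on the bipartite structure recorded just after Definition 1: for every $u \in U$ we have $N(u) \subseteq S$. Hence both $N(u_i)$ and $N(u_j)$ are subsets of $S$, and therefore so is their intersection. This yields $W_{ij} = |N(u_i) \cap N(u_j)| \leq |S| = n_2$. Equivalently, one could instead cite Observation 2, which gives $\deg(u_i) = |N(u_i)| \leq n_2$, and combine it with the containment $N(u_i) \cap N(u_j) \subseteq N(u_i)$ to reach the same conclusion.

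The claim is essentially definitional, so I do not expect any genuine obstacle. The only points requiring a little care are to keep the lower bound explicitly tied to the hypothesis that $(u_i u_j)$ is a true edge of $\mathcal{G}$ (otherwise the weight-matrix convention sets $W_{ij}=0$, which would violate the stated lower bound), and to derive the containment $N(u_i) \cap N(u_j) \subseteq S$ from the bipartite hypothesis rather than treating it as obvious.
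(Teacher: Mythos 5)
Your proposal is correct and follows essentially the same route as the paper: the lower bound from the edge-existence condition $N(u_i)\cap N(u_j)\neq\phi$, and the upper bound from the fact that neighborhoods of $U$-vertices live inside $S$. If anything, your upper-bound argument via the containment $N(u_i)\cap N(u_j)\subseteq N(u_i)\subseteq S$ is slightly cleaner than the paper's, which only exhibits the extremal case where both vertices are adjacent to all of $S$ rather than bounding the intersection for arbitrary pairs.
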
  
\begin{proof}
Lower bound of the edge weight of $\mathcal{G}$ is trivial. By the Definition 4, two vertices will be connected by an edge in the projected network, if they have at least one common neighbor in the parent bipartite graph. That's why $W_{ij} \geq 1$.\\
Upper bound can be proved from the Observation 1 maintained in Section 3. As for any $u_i \in U$, $1 \leq deg(u_i) \leq n_2$, so for any two vertex $u_i,u_j \in U$ if $u_i$ and $u_j$ is connected with all the vertices of $S$ i.e. $\forall k=1, 2, \dots\, n_2$ $(u_is_k) \in E(G)$ and $(u_js_k) \in E(G)$ then $N(u_i)= N(u_j)=S$ which clearly imply $ N(u_i) \cap N(u_j)=S$ hense $\vert N(u_i) \cap N(u_j) \vert =n_2$ and so $W_{ij} \leq n_2$. So, edge weight will be in between $1$ and $n_2$ i.e. $1 \leq W_{ij} \leq n_2$
\end{proof}
\begin{mytheo}
If $\mathcal{G}(U, \mathcal{E}, W)$ be the weighted projected network of a bipartite network $G(U, S, E)$ and $W_{ij}$ denotes the edge weight of the edge $(u_iu_j)$ then $\sum_{(u_iu_j)\in \mathcal{E}} W_{ij}= \sum_{k=1}^{n_2} {d_k \choose 2}$ if $d_k \geq 2$.
\end{mytheo}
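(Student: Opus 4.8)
The plan is to establish the identity by a double\mbox{-}counting argument, exchanging the order of summation between the vertex pairs of $U$ (which index the edges of $\mathcal{G}$) and the vertices of $S$ (which generate the common neighborhoods). The guiding intuition is already supplied by Lemma 1: every vertex $s_k \in S$ with $\vert N(s_k) \vert \geq 2$ induces a clique on its neighborhood $N(s_k) \subseteq U$ in the projected network, and this vertex contributes exactly one unit to the weight of each of the ${d_k \choose 2}$ edges of that clique, where $d_k = deg(s_k)$. Summing these contributions over all of $S$ should recover the total edge weight on the left\mbox{-}hand side.

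First I would rewrite the left\mbox{-}hand side as a sum over all unordered pairs rather than only over the edges of $\mathcal{G}$. Since $W_{ij} = \vert N(u_i) \cap N(u_j) \vert$ by Definition 4, and since any pair $\{u_i, u_j\}$ sharing no common neighbor is simply absent from $\mathcal{E}$ (equivalently, it carries weight $0$), no term is gained or lost by writing
\begin{equation*}
\sum_{(u_iu_j)\in \mathcal{E}} W_{ij} = \sum_{1 \leq i < j \leq n_1} \vert N(u_i) \cap N(u_j) \vert .
\end{equation*}
Next I would expand each cardinality as a sum of indicator terms, one for each vertex $s_k \in S$ lying in both neighborhoods, and interchange the two summations:
\begin{equation*}
\sum_{1 \leq i < j \leq n_1} \vert N(u_i) \cap N(u_j) \vert = \sum_{1 \leq i < j \leq n_1} \sum_{k=1}^{n_2} \mathbf{1}[\, s_k \in N(u_i) \cap N(u_j)\,] = \sum_{k=1}^{n_2} \bigl\vert \{ \{u_i,u_j\} : u_i, u_j \in N(s_k),\ i < j \} \bigr\vert .
\end{equation*}
For a fixed $s_k$, the inner count is precisely the number of unordered pairs drawn from its neighborhood, which equals ${deg(s_k) \choose 2} = {d_k \choose 2}$. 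Substituting this yields the claimed equality, and the terms with $d_k < 2$ vanish automatically since ${d_k \choose 2} = 0$ there, which is exactly what the side condition $d_k \geq 2$ records.

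The argument is essentially routine, so I do not expect a genuine obstacle; the single point demanding care is the legitimacy of the summation swap together with the verification that no common neighbor is double counted. Each $s_k$ contributes to the weight of the pair $\{u_i,u_j\}$ exactly once, namely when $s_k$ is simultaneously adjacent to both $u_i$ and $u_j$, so the indicator expansion above is an exact rewriting of $W_{ij}$. I would also state explicitly that reformulating the sum over all pairs is valid because the omitted pairs contribute zero, ensuring that no edge of $\mathcal{G}$ is dropped and no spurious term is introduced.
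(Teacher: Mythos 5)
Your proposal is correct and follows essentially the same route as the paper: both arguments attribute to each $s_k \in S$ a contribution of ${d_k \choose 2}$ (one unit per pair of its neighbors, i.e.\ per edge of the clique from Lemma 1) and sum over $S$. Your indicator expansion and explicit interchange of summations is simply a more rigorous write-up of the step the paper states informally.
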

\begin{proof}
This theorem can be proved from the concept described in Lemma 1. As each vertex $s_k \in S$ induces a clique of size $\vert N(s_k) \vert$ in $\mathcal{G}$. So, if $\vert N(s_k) \vert \geq 2$ then $s_k$ contributes to the edge weight in $\mathcal{G}$ by ${d_k \choose 2}$. If $\omega_{s_k}$ be the total edge weight due to the vertex $s_k$ then,\\
\[
    \omega_{s_k}= 
\begin{cases}
    {d_k \choose 2},& \text{if } d_k\geq 2\\
    0,              & \text{otherwise}
\end{cases}
\]
So, if we sum it up for all the vertices in $S$ then that will be equal to the sum of the all edge weights in $\mathcal{G}$. So,  $\sum_{(u_iu_j)\in \mathcal{E}} W_{ij}= \sum_{k=1}^{n_2} {d_k \choose 2}$, if $d_k \geq 2$.
\end{proof}
Now, let us take the situation of a complete bipartite graph. When a complete bipartite graph $G(U, S, E)$ is projecting for the vertex set $U$ with respect to the vertex set $S$ then the number of edges in the projected network will be ${n_1 \choose 2}$ and each edge will have edge weight $n_2$. So, total edge weight in the projected network of a bipartite network will be 
\begin{center}
$\sum_{(u_iu_j) \in \mathcal{E}} W_{ij}={n_1 \choose 2} n_2$\\
$\Rightarrow \sum_{(u_iu_j) \in \mathcal{E}} W_{ij}=  \frac{n_1n_2(n_1-1)}{2}$
\end{center} 
and in any other cases total edge weight will be less than $\frac{n_1n_2(n_1-1)}{2}$ which can be formally stated as follows:
\begin{mycor}
If $\mathcal{G}(U, \mathcal{E}, W)$ be the weighted projected network of a bipartite network $G(U, S, E)$ and $\omega$ denotes the sum of all edge weights of $\mathcal{G}$ then $\omega \leq \frac{n_1n_2(n_1-1)}{2}$ .
\end{mycor}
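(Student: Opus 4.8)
The plan is to read off the bound directly from Theorem~3, which already supplies the total edge weight in closed form, and then apply a single monotonicity estimate. Writing $d_k=\deg(s_k)$ for the degree of the vertex $s_k\in S$, Theorem~3 gives
$\omega=\sum_{(u_iu_j)\in\mathcal{E}}W_{ij}=\sum_{k=1}^{n_2}{d_k \choose 2}$,
where each vertex with $d_k<2$ contributes nothing. So the whole task reduces to bounding this sum of binomial coefficients.

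First I would bound each summand on its own. By Observation~2, every vertex of $S$ satisfies $1\leq d_k\leq n_1$, so $n_1$ is the largest degree any $s_k$ can have. Since $x\mapsto {x \choose 2}=\frac{x(x-1)}{2}$ is non-decreasing for $x\geq 1$, it follows that ${d_k \choose 2}\leq {n_1 \choose 2}$ for every $k$; and this remains valid (trivially) for the vertices with $d_k<2$, whose contribution is $0\leq {n_1 \choose 2}$.

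Next I would sum these $n_2$ termwise inequalities to obtain
$\omega=\sum_{k=1}^{n_2}{d_k \choose 2}\leq n_2{n_1 \choose 2}=\frac{n_1n_2(n_1-1)}{2}$,
which is exactly the claimed bound. Equality holds precisely when every $d_k=n_1$, i.e. when each vertex of $S$ is adjacent to all of $U$, which is the complete bipartite case already singled out just before the corollary; this confirms the bound is tight.

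The argument is essentially one monotonicity estimate followed by summation, so I do not expect a genuine obstacle. The only point requiring a little care is the bookkeeping for vertices with $d_k\in\{0,1\}$: these are excluded from the closed form of Theorem~3, and one must simply observe that reinstating them with contribution $0$ leaves the sum unchanged while the termwise bound ${d_k \choose 2}\leq {n_1 \choose 2}$ continues to hold.
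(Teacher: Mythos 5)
Your proof is correct, but it reaches the bound by a different decomposition than the paper does. The paper argues over the edges of $\mathcal{G}$: it computes the complete bipartite case directly, noting that the projection then has ${n_1 \choose 2}$ edges each of weight $n_2$ (the maximum allowed by Lemma~3), and asserts that every other case gives less. You instead argue over the vertices of $S$: starting from the closed form $\omega=\sum_{k}{d_k \choose 2}$ of Theorem~3, you bound each summand by ${n_1 \choose 2}$ via Observation~2 and the monotonicity of $x\mapsto x(x-1)/2$, then sum over the $n_2$ terms. Both routes land on $\frac{n_1 n_2(n_1-1)}{2}$, but yours has the advantage of being a genuine termwise estimate rather than an appeal to the extremal case --- the paper's claim that ``in any other cases total edge weight will be less'' is left unjustified, whereas your argument proves it and makes the equality condition ($d_k=n_1$ for all $k$, i.e.\ the complete bipartite graph) explicit. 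Your handling of the vertices with $d_k\in\{0,1\}$ is also the right bookkeeping. The only caveat worth flagging is that both your argument and the paper's inherit whatever conventions Theorem~3 and Observation~2 carry (the graph is assumed connected, so $d_k\geq 1$), but this does not affect the validity of the bound.
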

\section{Conclusion and Future Directions}  \label{Sec:CFD}
In this work, we have studied the properties of a projected network generated from a bipartite network. We have found out some interesting properties about connectedness, presence of cliques in the projected network, maximum possible possible edge weight for the weighted projected network. One algorithm has been presented in section \ref{Sec:OMP} which computes the adjacency matrix of the projected network from the given bi-adjacency matrix of a bipartite graph.\\
Now this work can be extended in different directions. First, the algorithm that has been proposed in Section 4 has quadriatic complexity. So, efficient algorithm can be proposed to compute the adjacency matrix of the projected network. Secondly, many real life network data are available for different kinds of networks. Theoretical results that has been studied in this paper that can be applied to study the properties of those real life networks.   
\section*{Acknowledgment}
The work has been financially supported by the project \textit{E-business Center of Excellence} funded by Ministry of Human Resource and Development (MHRD), Government of India under the scheme of \textit{Center for Training and Research in Frontier Areas of Science and Technology (FAST)}, Grant No. F.No.5-5/2014-TS.VII  .
\bibliographystyle{IEEEtran}
\bibliography{refer}

\begin{thebibliography}{10}
\providecommand{\url}[1]{#1}
\csname url@samestyle\endcsname
\providecommand{\newblock}{\relax}
\providecommand{\bibinfo}[2]{#2}
\providecommand{\BIBentrySTDinterwordspacing}{\spaceskip=0pt\relax}
\providecommand{\BIBentryALTinterwordstretchfactor}{4}
\providecommand{\BIBentryALTinterwordspacing}{\spaceskip=\fontdimen2\font plus
\BIBentryALTinterwordstretchfactor\fontdimen3\font minus
  \fontdimen4\font\relax}
\providecommand{\BIBforeignlanguage}[2]{{%
\expandafter\ifx\csname l@#1\endcsname\relax
\typeout{** WARNING: IEEEtran.bst: No hyphenation pattern has been}%
\typeout{** loaded for the language `#1'. Using the pattern for}%
\typeout{** the default language instead.}%
\else
\language=\csname l@#1\endcsname
\fi
#2}}
\providecommand{\BIBdecl}{\relax}
\BIBdecl

\bibitem{bondy1976graph}
J.~A. Bondy and U.~S.~R. Murty, \emph{Graph theory with applications}.\hskip
  1em plus 0.5em minus 0.4em\relax Citeseer, 1976, vol. 290.

\bibitem{matsuo2009community}
Y.~Matsuo and H.~Yamamoto, ``Community gravity: measuring bidirectional effects
  by trust and rating on online social networks,'' in \emph{Proceedings of the
  18th international conference on World wide web}.\hskip 1em plus 0.5em minus
  0.4em\relax ACM, 2009, pp. 751--760.

\bibitem{ricci2011introduction}
F.~Ricci, L.~Rokach, and B.~Shapira, \emph{Introduction to recommender systems
  handbook}.\hskip 1em plus 0.5em minus 0.4em\relax Springer, 2011.

\bibitem{huang2007analyzing}
Z.~Huang, D.~D. Zeng, and H.~Chen, ``Analyzing consumer-product graphs:
  Empirical findings and applications in recommender systems,''
  \emph{Management science}, vol.~53, no.~7, pp. 1146--1164, 2007.

\bibitem{gustafsson2014describing}
H.~Gustafsson, D.~J. Hancock, and J.~C{\^o}t{\'e}, ``Describing citation
  structures in sport burnout literature: A citation network analysis,''
  \emph{Psychology of Sport and Exercise}, vol.~15, no.~6, pp. 620--626, 2014.

\bibitem{lattanzi2009affiliation}
S.~Lattanzi and D.~Sivakumar, ``Affiliation networks,'' in \emph{Proceedings of
  the forty-first annual ACM symposium on Theory of computing}.\hskip 1em plus
  0.5em minus 0.4em\relax ACM, 2009, pp. 427--434.

\bibitem{ravasz2002hierarchical}
E.~Ravasz, A.~L. Somera, D.~A. Mongru, Z.~N. Oltvai, and A.-L. Barab{\'a}si,
  ``Hierarchical organization of modularity in metabolic networks,''
  \emph{science}, vol. 297, no. 5586, pp. 1551--1555, 2002.

\bibitem{vazquez2002modeling}
A.~V{\'a}zquez, A.~Flammini, A.~Maritan, and A.~Vespignani, ``Modeling of
  protein interaction networks,'' \emph{Complexus}, vol.~1, no.~1, pp. 38--44,
  2002.

\bibitem{ozgur2008identifying}
A.~{\"O}zg{\"u}r, T.~Vu, G.~Erkan, and D.~R. Radev, ``Identifying gene-disease
  associations using centrality on a literature mined gene-interaction
  network,'' \emph{Bioinformatics}, vol.~24, no.~13, pp. i277--i285, 2008.

\bibitem{zweig2011systematic}
K.~A. Zweig and M.~Kaufmann, ``A systematic approach to the one-mode projection
  of bipartite graphs,'' \emph{Social Network Analysis and Mining}, vol.~1,
  no.~3, pp. 187--218, 2011.

\bibitem{zhou2007bipartite}
T.~Zhou, J.~Ren, M.~Medo, and Y.-C. Zhang, ``Bipartite network projection and
  personal recommendation,'' \emph{Physical Review E}, vol.~76, no.~4, p.
  046115, 2007.

\bibitem{shang2008personal}
M.-S. Shang, Y.~Fu, and D.-B. Chen, ``Personal recommendation using weighted
  bipartite graph projection,'' in \emph{Apperceiving Computing and
  Intelligence Analysis, 2008. ICACIA 2008. International Conference on}.\hskip
  1em plus 0.5em minus 0.4em\relax IEEE, 2008, pp. 198--202.

\bibitem{guille2013information}
A.~Guille, H.~Hacid, C.~Favre, and D.~A. Zighed, ``Information diffusion in
  online social networks: A survey,'' \emph{ACM SIGMOD Record}, vol.~42, no.~2,
  pp. 17--28, 2013.

\bibitem{bakshy2012role}
E.~Bakshy, I.~Rosenn, C.~Marlow, and L.~Adamic, ``The role of social networks
  in information diffusion,'' in \emph{Proceedings of the 21st international
  conference on World Wide Web}.\hskip 1em plus 0.5em minus 0.4em\relax ACM,
  2012, pp. 519--528.

\bibitem{an2016diffusion}
Y.-H. An, Q.~Dong, C.-J. Sun, D.-C. Nie, and Y.~Fu, ``Diffusion-like
  recommendation with enhanced similarity of objects,'' \emph{Physica A:
  Statistical Mechanics and its Applications}, vol. 461, pp. 708--715, 2016.

\bibitem{diestel2000graph}
R.~Diestel, \emph{Graph theory $\{$graduate texts in mathematics;
  173$\}$}.\hskip 1em plus 0.5em minus 0.4em\relax Springer-Verlag Berlin and
  Heidelberg GmbH \& amp, 2000.

\bibitem{abiad2013algebraic}
A.~Abiad, C.~Dalf{\'o}, and M.~{\`A}. Fiol, ``Algebraic characterizations of
  regularity properties in bipartite graphs,'' \emph{European Journal of
  Combinatorics}, vol.~34, no.~8, pp. 1223--1231, 2013.

\bibitem{cormen2009introduction}
T.~H. Cormen, \emph{Introduction to algorithms}.\hskip 1em plus 0.5em minus
  0.4em\relax MIT press, 2009.

\bibitem{bollobas2013modern}
B.~Bollob{\'a}s, \emph{Modern graph theory}.\hskip 1em plus 0.5em minus
  0.4em\relax Springer Science \& Business Media, 2013, vol. 184.

\end{thebibliography}

\end{document}